\newtheorem{mytheorem}{Theorem}[section]
\numberwithin{mytheorem}{section}
\newtheorem{mylemma}{Lemma}[section]
\numberwithin{mylemma}{section}
\numberwithin{clm}{mytheorem}
\numberwithin{obs}{mytheorem}
\let\oldnl\nl
\newcommand{\nonl}{\renewcommand{\nl}{\let\nl\oldnl}}
\newcommand{\OO}{{\mathcal O}}
\newcommand{\npc}{\textsf{NP}-complete\xspace}
\newcommand{\nph}{\textsf{NP}-hard\xspace}
\newcommand{\no}{{\emph no}\xspace}
\newcommand{\yes}{{\emph yes}\xspace}
\providecommand{\customgenericname}{}
\newtheorem{preprocessing rule}{Preprocessing Rule}
\newtheorem{reduction rule}{Reduction Rule}
\newtheorem{branching rule}{Branching Rule}
\newcommand{\prob}[3]{
  \vspace{-2mm}
\noindent\fbox{
  \begin{minipage}{0.96\textwidth}
  \begin{tabular*}{\textwidth}{@{\extracolsep{\fill}}lr} \textsc{#1}  
\\ \end{tabular*}
  {\bf{Input:}} #2  \\
  {\bf{Question:}} #3
  \end{minipage}
  }
  \vspace{5mm}
}
\begin{document}

\title{Role Coloring Bipartite Graphs}
%
%
\author{Sukanya Pandey\inst{1}\and
Vibha Sahlot\inst{2}}
\authorrunning{S. Pandey et al.}
%
\institute{Utrecht University, The Netherlands \\
\email{s.pandey1@uu.nl} \and
Charles University in Prague, Czechia\\
\email{sahlotvibha@gmail.com}
}
\maketitle              

\begin{abstract}
A $k$-role coloring $\alpha$ of a graph $G$ is an assignment of $k$ colors to the vertices of $G$ such that every color is used at least once and if any two vertices are assigned the same color, then their neighborhood are assigned the same set of colors. By definition, every graph on $n$ vertices admits an $n$-role coloring.

While for every graph on $n$ vertices, it is trivial to decide if it admits a $1$-role coloring, determining whether a graph admits a $k$-role coloring is a notoriously hard problem for $k \geq 2$. In fact, it is known that $k$-{\sc Role coloring} is \npc for $k \geq 2$ on general graphs.

There has been extensive research on the complexity of $k$-{\sc role coloring} on various hereditary graph classes. Furthering this direction of research, we show that $k$-{\sc Role coloring} is \npc on bipartite graphs for $k \geq 3$ (while it is trivial for $k=2$). 
We complement the hardness result by characterizing $3$-role colorable bipartite chain graphs, leading to a polynomial time algorithm for $3$-{\sc Role coloring} for this class of graphs. We further show that  $2$-{\sc Role coloring} is \npc for graphs that are $d$ vertices or edges away from the class of bipartite graphs, even when $d=1$.
\end{abstract}
\keywords{{Role coloring \and Bipartite Graphs \and \nph \and Bipartite Chain Graphs}}

\section{Introduction}

Given a graph $G$, how would you assign $k$ colors to its vertices such that every color is assigned to some vertex and if any two vertices get the same color, then their neighborhood are assigned the same set of colors? Is such an assignment possible for any input graph? Known as $k$-\textsc{Role coloring}, this problem was motivated by applications in sociology. It was introduced as a graph coloring problem by Borgatti and Everett~\cite{rcol} in 1991. The underlying principle was that in a social network, individuals play the same role if they relate in the same way to other individuals playing counterpart roles.

Formally, for any input graph $G$, a $k$-role coloring is an assignment of exactly $k$ colors to its vertices such that if any two vertices get the same color, then the set of colors assigned to their neighborhood is also 
the same. That is, $k$-role coloring is a surjective map
$\alpha: V(G) \rightarrow \{1, 2, ..., k\}$ such that for all $u, v \in V(G)$, if $\alpha(u) = \alpha(v)$ then $\alpha(N(u)) = \alpha(N(v))$. For a function $\alpha$ on domain $U$ and a set $A \subseteq U$, we abuse the notation $\alpha(A)$ to denote the set $\{\alpha(a)|a \in A\}$.

Corresponding to a $k$-role assignment $\alpha$, the role graph $R$ is defined as the graph with $V(R)=  \{1, 2, ..., k\}$ and $E(R)= \{ (\alpha(u), \alpha(v)) |  (u,v) \in E(G)\}$. Since each color is assigned to some vertex of $G$, it is easy to see that when $G$ is connected, the role graph $R$ is also connected and $|V(R)| \leq |V(G)|$. 
Also, for all $v \in V(G)$, $deg_G(v) \geq deg_R(\alpha(v))$. 

We define the problem $k$-\textsc{Role coloring} as follows,\\

 \prob{$k$-{\sc Role coloring}}{An undirected graph $G$ and an integer $k$.}{Does there exist a surjective function $\alpha: V(G) \rightarrow \{1 ,2 ,..., k\}$ satisfying: if $\alpha(u) = \alpha(v)$, then $\alpha(N(u)) =\alpha(N(v))$ for all  $u, v \in V(G)$?}
 
 We can also define the problem in terms of role graphs. By $N_G(u)$ and $N_R(v)$, we refer to the vertices in the neighborhood of the vertices $u \in V(G)$ and $v \in V(R)$, respectively. \\

 \prob{$R$-{\sc Role coloring}}{Undirected graphs $G$ and $R$.}{Does there exist a surjective function $\alpha: V(G) \rightarrow V(R)$ satisfying: $\alpha(N_G(u)) = N_R(\alpha(u))$ for all  $u \in V(G)$?}
 
This problem is equivalent to deciding if there exists a {\em locally surjective} homomorphism between the graphs $G$ and $R$~\cite{Rrcol}. 
Notice that given a role graph $R$ and a role coloring $\alpha$ of $G$, we can find in polynomial time if $\alpha$ agrees with $R$, that is whether $G$ is $R$-role colorable with respect to $\alpha$. Also, given a role coloring $\alpha$ of $G$, we can find whether $\alpha$ is a valid role coloring in polynomial time. Further, we can also find the role graph $R$ corresponding to $\alpha$.

\smallskip

\paragraph{Previous Work:} 
Fiala et al.~\cite{Rrcol} showed that $R$-{\sc Role coloring} can be solved in $n^{\mathcal{O}(1)}$ time when
\begin{itemize}
	\item $R$ has no edges, or
	\item it has a component isomorphic to a single loop-incident vertex, or 
	\item it is simple and bipartite and has at least one component isomorphic to a~$K_2$. 	
\end{itemize}

For all other cases, they proved that $R$-{\sc Role coloring} is \npc. They used this result to show that $k$-{\sc Role coloring} is \npc for all $k \geq 3$. Prior to their work, it was shown by Roberts and Sheng~\cite{2rcol} that $k$-{\sc Role coloring} is \npc for $k=2$. Hence, the following dichotomy exists for $k$-{\sc Role coloring}: Given any $n$ vertex graph $G$, $k$-{\sc Role coloring} is polynomial time solvable when $k= 1$ or $n$ and \npc otherwise.

Various studies have been conducted to find the complexity of computing role-coloring on hereditary graph classes. Sheng~\cite{2chordal} gave a greedy algorithm to compute a $2$-role coloring of chordal graphs in polynomial time. Later,Heggernes et al.~\cite{chordal} showed that $k$-{\sc role coloring} is \npc on chordal graphs when $k \geq 3$. Following these results, Dourado~\cite{split} showed that on split graphs, a subclass of chordal graphs, $3$-{\sc role coloring} can be decided in linear time, whereas for $k \geq 4$, $k$-{\sc role coloring} is \npc on split graphs. The problem remains \npc on planar graphs for $k \geq 2$~\cite{planar}. The only known classes of graphs for which $k$-{\sc role coloring} can be decided in polynomial time for any fixed $k$, are Cographs~\cite{planar} and Proper Interval graphs~\cite{pig}.

Computational complexity and algorithms for many related variants of graph colouring have been studied in the literature. One of the most common variants is {\sc Perfect coloring}. In this problem, the aim is to find the minimum number of colors such that the endpoints of each edge in the graph are colored differently. It is \npc to decide if a given graph admits a perfect coloring with $k$ colors except for the cases $k \in \{0,1,2\}$~\cite{properColoring}.  This coloring problem remains \npc for $k=3$ even on $4$-regular planar graphs~\cite{DAILEY1980289}. A closely related variant is {\sc Ecological Coloring}. A  coloring  is ecological  if  any two vertices that are  surrounded  by  the vertex set with the same set of colors are colored the same. In this problem, our aim is to decide whether there exists an ecological coloring for a given integer $k$ in a graph. The computational complexity of this problem has also been studied for different graph classes ~\cite{BORGATTI199443,BORGATTI1992287,CrescenziIGRV08}. Another related problem is {\sc Coupon-coloring}. In this, we aim to find the maximum number of colors such that the open neighborhood of each vertex contains all the colors~\cite{CHEN201594}. This problem has been investigated for several graph families~\cite{CHEN201594,CouponColoring7,CouponColoring13}. 
\\
\paragraph{Our Contribution:}
Continuing the line of research on complexity of {\sc role coloring} on hereditary graph classes, we explore this problem when restricted to the class of bipartite graphs. Due to the work of Fiala et al.~\cite{Rrcol}, there exists a complexity dichotomy for $R$-{\sc Role coloring} on the class of bipartite graphs. We build on this result and focus on the more general problem of $k$-{\sc Role coloring} on bipartite graphs. We show that $k$-{\sc Role coloring} is \npc on the class of bipartite graphs for all $k \geq 3$ (while it is trivial when $k \leq 2$).

Note that the \textsf{NP}-completeness of $k$-{\sc Role coloring} does not follow from the \textsf{NP}-completeness of $R$-{\sc Role coloring}, where $R$ is a graph on $k$ vertices satisfying the characterization in~\cite{Rrcol}. To exemplify this, let us consider the two variants of the problem with respect to the class of split graphs. Let $R_1$ be a graph isomorphic to a $K_2$ with a self loop on exactly one end point, and $R_2$ be isomorphic to a $K_2$ with self loops incident on both the end points. It has been shown in~\cite{chordal} that $R_2$-{\sc Role coloring} is \npc on split graphs. However, $2$-{\sc Role coloring} is trivially solvable in polynomial time on split graphs with the corresponding role graph being isomorphic to $R_1$. 

Complementing our hardness result, we give a polynomial time algorithm for $3$-{\sc Role coloring} on bipartite chain graphs. Bipartite chain graphs are bipartite graphs such that for the vertices in each partition, their neighborhood can be ordered linearly with respect to inclusion. Equivalently, they are bipartite graphs that exclude $2K_2$ as an induced subgraph. (A $2K_2$ is a set of two disjoint edges). 

Finally, we investigate the problem $2$-{\sc Role coloring} on graphs that are $d$ vertices (or edges) away from bipartite graphs. We show that the problem is \npc even when $d=1$. Henceforth, we shall refer to such a graph $G$ as an {\em ``almost bipartite"} graph. Parameterization by distance from a graph class was first studied for the vertex coloring problem in~\cite{almost}. We draw inspiration from that and explore it with respect to {\sc $k$-Role coloring}.
\\
\paragraph{Organization of the paper:} The paper is organised as follows. We define the terminology and notations used in this paper in Section \ref{sec:prelim}. 

Section \ref{sec:BipNPComp} gives the proof for \textsf{NP}-completeness of $k$-{\sc Role coloring} on bipartite graphs, for fixed $k\geq 3$. The proof is divided into three parts. We show separately the \textsf{NP}-completeness for each of the cases when $k\in \{3,  4\}$ and for the case when $k \geq 5$. In Section \ref{sec:poly}, we  characterize bipartite chain graphs that are $3$-role colorable, and in Section \ref{sec:para}, we show that $2$-{\sc role coloring} is \textsf{NP}-complete on ``almost bipartite graphs".
 
 \section{Preliminaries} \label{sec:prelim} 
 In this section, we state the graph theoretic terminology and notation used in this paper. The set of consecutive integers from $1$ to $n$ is denoted by $[n]$. Let $f:A \rightarrow B$ be a function. Then, for any non-empty set $A' \subseteq A$, by $f(A')$, we denote the set $\{f(a)| a \in A'\}$. The vertex set and the edge set of a graph $G$ are denoted by $V(G)$ and $E(G)$, respectively (or simply $V$ and $E$ when the underlying graph $G$ is clear from the context). By $|G|$, we denote the order of $G$, that is the number of its vertices $|V(G)|$. An edge between vertices $u$ and $v$ is denoted as $(u,v)$.  

For a vertex $v \in V(G)$, its {\em neighborhood} $N_G(v)$ is the set of all vertices adjacent to it and its {\em closed neighborhood} $N_G[v]$ is the set $N_G(v) \cup \{v\}$. This notation is extended to subsets of vertices as $N_G[S]=\bigcup_{v \in S}N_G[v]$ and $N_G(S)=N_G[S] \setminus S$ where $S \subseteq V(G)$. The {\em degree} of a vertex $v \in V(G)$, denoted by $deg_G(v)$, is the size of $N_G(v)$. A {\em pendant vertex} is a vertex $v \in V(G)$ with $deg_G(v) = 1$.  
A graph $G'$ is a \emph{subgraph} of $G$ if $V(G')\subseteq V(G)$ and $E(G')\subseteq E(G)$. A graph $G'$ is an \emph{induced subgraph} of $G$ if for all $x,y \in V(G')$ such that $(x,y)\in E(G)$, then $(x,y)\in E(G')$.  A {\em path} in a simple graph is a sequence of distinct vertices with an edge between every pair of  consecutive vertices. The {\em length of a path} is the number of vertices in a path. We denote a path on $n$ vertices by $P_n$. A {\em pendant path} is an induced path attached to a vertex. For further details on graphs, refer to~\cite{diestel2000graph}.


A \emph{hypergraph} is a pair $H=(Q_H, S_H)$ where $Q_H$ is a finite set of vertices and $S_H$ is a collection of non-empty subsets of $Q_H$ called \emph{hyper-edges}. We define the problem {\sc Hypergraph $k$-coloring} as follows:\\
\prob{Hypergraph $k$-coloring}{A connected hypergraph $H = (Q_H,S_H)$.}{Does there exist a coloring which assigns $k$ colors to vertices in $Q_H$ such that no hyperedge $S \in S_H$ is monochromatic?}\\

A hypergraph is called $k$-colorable if its vertices can be assigned $k$ colors such that none of its edges is monochromatic. A hypergraph is $q$-uniform if each of its edges is a subset of its vertices of size $q$. Lov{\'{a}}sz~\cite{Hypergraph-2-coloring} showed that it is \nph to test $2$-colorability of $3$-uniform hypergraphs. The following theorem by Khot implies that \textsc{Hypergraph-k-coloring} is \nph on $3$-regular hypergraphs, for any fixed $k>1$.

\begin{theorem} \cite[Theorem~1.1]{Hypergraph-k-coloring} \label{kHypergraphColoring}
For every constant $\delta> 0$, it is \nph to distinguish whether an $n$-vertex $3$-uniform hypergraph is $3$-colorable or it contains no independent set of size $\delta n$. In particular, it is \nph to color
$3$-colorable $3$-uniform hypergraphs with constantly many colors.
\end{theorem}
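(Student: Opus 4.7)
The plan is to prove the gap version first; the ``in particular'' statement then follows immediately, because any proper $c$-coloring of an $n$-vertex hypergraph has a color class (hence an independent set) of size at least $n/c$, so an algorithm that $c$-colors every 3-colorable input would instantly distinguish the two cases as soon as $\delta<1/c$. Thus the whole burden is to produce the gap between ``3-colorable'' and ``no independent set of size $\delta n$''.

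To obtain the gap hardness I would run a long-code PCP reduction from a strong Label Cover instance $\mathcal{L}=(U,V,E,\Sigma_U,\Sigma_V,\{\pi_e\})$. By the PCP theorem together with Raz's parallel repetition, one may assume that it is \nph to distinguish completely satisfiable $\mathcal{L}$ from those in which no labeling satisfies more than an $\eta$-fraction of constraints, for any $\eta=\eta(\delta)>0$ that the soundness step will demand. For each super-node $v\in V$, I place a ``cloud'' of $3^{|\Sigma_V|}$ hypergraph vertices indexed by functions $f:\Sigma_V\to\{0,1,2\}$; the intended 3-coloring, given a label $\ell_v\in\Sigma_V$ for $v$, is $f\mapsto f(\ell_v)$. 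The 3-uniform hyperedges come from a dictatorship test: for each Label Cover edge with projection $\pi:\Sigma_U\to\Sigma_V$, sample three functions $f_1,f_2,f_3$ from a carefully chosen product-like distribution that, at a random ``distinguished'' coordinate pulled through $\pi$, forces the triple $\bigl(f_1(\cdot),f_2(\cdot),f_3(\cdot)\bigr)$ to take the three distinct values $\{0,1,2\}$, and is as unconstrained as possible elsewhere. The corresponding hyperedge is then non-monochromatic under every valid long-code encoding.

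Completeness is direct: a satisfying labeling $\{\ell_v\}$ of $\mathcal{L}$ yields the coloring $f\mapsto f(\ell_v)$, under which every hyperedge receives all three colors by construction of the test. The hard part is soundness. Given an independent set $I$ with $|I|\geq\delta n$, for each cloud $v$ let $g_v:\{0,1,2\}^{\Sigma_V}\to\{0,1\}$ be the indicator of $I$ on that cloud; a standard averaging shows $\mathbb{E}[g_v]\geq\delta/2$ for a positive-density set of $v$'s. Independence of $I$ prevents the test triples from ever lying jointly in $I$, which constrains the Fourier structure of $g_v$. Invoking an invariance-principle / noise-stability argument on the alphabet $\{0,1,2\}$ (in the style of Mossel--O'Donnell--Oleszkiewicz), together with the ``smoothness'' of a parallel-repeated Label Cover, one shows that each such $g_v$ must have a small list of highly influential coordinates. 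Rounding these influential coordinates to labels in $\Sigma_V$ produces a randomized labeling of $\mathcal{L}$ satisfying more than an $\eta$-fraction of constraints, contradicting the Label Cover hardness for $\eta$ chosen small enough as a function of $\delta$.

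The main obstacle is the soundness step: designing the 3-query test so that ``no monochromatic hyperedge'' is violated only by dictator-like indicators, and then carrying out the invariance / noise-stability estimate over the non-Boolean alphabet $\{0,1,2\}$ with quantitative bounds strong enough to complete the Label Cover decoding. Everything else — completeness, the pigeonhole reduction from the ``in particular'' corollary to the gap statement, and the choice of $\eta=\eta(\delta)$ from parallel repetition — is essentially routine once the right test distribution and smoothness property are in place.
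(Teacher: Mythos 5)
The first thing to say is that the paper does not prove this statement at all: it is imported verbatim from Khot (Theorem~1.1 of the cited reference) and used as a black box in the \textsf{NP}-hardness reductions, so there is no internal proof to compare your attempt against. In the context of this paper, the correct move is simply to cite the result.

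Judged on its own terms, your proposal is a plan rather than a proof, and the gap sits exactly where you place it. The architecture you describe --- a Raz-repeated Label Cover outer verifier, long codes over the alphabet $\{0,1,2\}$, a three-query non-monochromaticity test, and decoding of influential coordinates into a Label Cover labeling --- is indeed the general shape of Khot's argument, and your pigeonhole derivation of the ``in particular'' clause from the gap version is correct and routine. But the soundness analysis that you defer as ``the main obstacle'' (specifying the distribution on $(f_1,f_2,f_3)$ and proving that a $\delta n$-size independent set forces dictator-like structure) is the entire mathematical content of the theorem; without it nothing is established. Two further cautions: the invariance-principle machinery of Mossel--O'Donnell--Oleszkiewicz postdates Khot's 2002 proof, which instead uses a direct Fourier-analytic argument over the ternary alphabet together with a smooth (multi-layered) outer PCP, so the tool you propose to invoke is not the one that actually closes the argument and you give no quantitative route to the required noise-stability bound; and the ``standard'' parallel-repetition soundness alone does not supply the smoothness property your decoding step silently assumes --- arranging for it is part of the construction, not a free consequence.
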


Following are some observations about the role graph that we use throughout the paper without explicitly referring to them.


\smallskip
\noindent\textbf{Observation 2.1} (\cite{Rrcol}) If $G$ has an $R$-role coloring $\alpha$, then $deg_G(u) \geq deg_R(\alpha(u))$ for all vertices $u \in V(G)$.

\smallskip
\noindent\textbf{Observation 2.2} (\cite{Rrcol})
If $G$ is a connected graph with a valid $k$-role coloring $\alpha$, then the role graph corresponding to $\alpha$ must be connected.\\
\noindent{\em Proof.}
Since $G$ is connected, there is a path between every pair of vertices $u, v$ $\in V(G)$. Let $P= u{x_1}{x_2}...{x_{l-2}}v$ be a path of length $l$ in $G$. Let $R$ be the role-graph corresponding to $\alpha$. By definition, $\alpha(N_{G}(u))= N_{R}(\alpha(u))$. That is, $\alpha(x_1) \in N_R(\alpha(u))$. By induction on $l$, it can be shown that there exists a path between $\alpha(u)$ and $\alpha(v)$ in $R$. Since the choice of $u$ and $v$ was arbitrary, the claim is true for all pairs of vertices in $R$. Hence, $R$ is connected.\qed

\section{\textsf{NP}-completeness of $k$-{\sc Role coloring} on bipartite graphs} \label{sec:BipNPComp}

In this section, we prove that $k$-{\sc Role coloring} is \npc on the class of bipartite graphs, when $k \geq 3$.

\begin{mytheorem}\label{thm:1} 
$k$-{\sc Role coloring} is \npc on bipartite graphs, for any fixed $k \geq 3$.
\end{mytheorem}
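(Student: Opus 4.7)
The plan is to first observe that membership in \textsf{NP} is immediate: given a candidate assignment $\alpha\colon V(G)\to [k]$, we can verify in polynomial time that $\alpha$ is surjective and that $\alpha(N(u))=\alpha(N(v))$ whenever $\alpha(u)=\alpha(v)$. The bulk of the work is then showing \nph, which I would split into three sub-cases $k=3$, $k=4$, and $k\geq 5$, each reducing from an appropriate flavor of \textsc{Hypergraph Coloring}.

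For $k=3$ my source would be \textsc{2-Colorability} of 3-uniform hypergraphs, known to be \nph by Lov\'asz; for $k=4$ I would use \textsc{3-Colorability} of 3-uniform hypergraphs, which is \nph by Theorem~2.1. Given a 3-uniform hypergraph $H=(Q,S)$, I would start from its incidence bipartite graph $G_0$ with parts $Q$ and $S$ and an edge $\{q,s\}$ whenever $q\in s$; each vertex of $S$ then has degree exactly $3$. I would then attach to every vertex of $Q$ and $S$ a constant-size bipartite pendant gadget whose purpose is twofold: first, to make the degree sequence and adjacency pattern incompatible with role assignments that place $S$-vertices on ``hypergraph colors'', thereby forcing all of $S$ onto a distinguished ``middle'' role and all of $Q$ onto the remaining $k-1$ ``leaf'' roles; and second, to keep the whole construction bipartite. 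The intended role graph is $P_3$ for $k=3$ and $K_{1,3}$ for $k=4$, with the center corresponding to $S$ and the leaves to the hypergraph colors. Under such a role assignment the local-surjectivity condition at $s\in S$ reads exactly ``the three neighbors of $s$ together realize all leaf colors'', which matches non-monochromaticity (resp.\ properness) of the hypergraph coloring. Thus a $(k-1)$-coloring of $H$ translates into a $k$-role coloring of the constructed graph and vice versa.

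For $k\geq 5$ I would take the $k=4$ construction as a black box and append, at a distinguished vertex, a bipartite pendant path of length $k-4$. Because every internal vertex of such a path has degree $2$ and every endpoint has degree $1$, these vertices are forced to form $k-4$ fresh singleton color classes that extend the $k=4$ role graph into a tree on $k$ vertices; such an extension is realizable exactly when the underlying $k=4$ instance is $4$-role colorable, so the reduction from the $k=4$ case carries through. Together with the parity of the attached path, bipartiteness is preserved.

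The hard part will be the design and correctness analysis of the forcing gadgets: I need to certify that no \emph{spurious} role coloring exists. Concretely, I have to rule out role graphs containing loops (corresponding to monochromatic edges across the bipartition), role graphs isomorphic to other small bipartite graphs on $k$ vertices, and role colorings in which $Q$- and $S$-vertices trade their intended roles. The workhorse should be a combination of degree arguments ($\deg_G(v)\geq \deg_R(\alpha(v))$) with the observation that a vertex colored $c$ must see each color of $N_R(c)$ in its neighborhood; when applied to small gadgets with carefully tuned degrees, this should leave only the intended coloring possible, but verifying the required case analysis for each of $k=3$, $k=4$, and the path extension for $k\geq 5$ is the technical crux.
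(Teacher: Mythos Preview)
Your overall strategy---NP membership by direct verification, then NP-hardness in three sub-cases $k=3$, $k=4$, $k\geq 5$, each via the incidence bipartite graph of a $3$-uniform hypergraph decorated with pendant gadgets, with intended role graphs $P_3$ and $K_{1,3}$---matches the paper's approach closely. For $k=3$ and $k=4$ the paper's gadgets are specifically a pendant $P_2$ on each vertex of $Q$ (for $k=3$) and a single pendant on each vertex of $S$ (for $k=4$); your proposal is compatible with this but leaves the gadgets unspecified, and the case analysis you flag as ``the technical crux'' is precisely what the paper carries out.

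Where your plan diverges is the $k\geq 5$ case, and there the sketch has a genuine gap. You propose to treat the $k=4$ instance $G$ as a black box and attach a \emph{single} pendant path of length $k-4$ at one distinguished vertex, asserting that the path vertices are ``forced to form $k-4$ fresh singleton color classes''. This is not forced: the $k=4$ construction already contains a pendant $p^s$ at every $s\in S$, so your new path endpoint has the same degree profile as those pendants and can legitimately share their role. Even if the path did receive fresh colors, restricting a $k$-role coloring of $G'$ to $G$ need not yield a valid $4$-role coloring of $G$: the distinguished vertex loses a colored neighbor and may then violate the neighborhood condition within its own color class. The paper sidesteps both issues by \emph{not} going through the $k=4$ case at all: it reduces again from \textsc{Hypergraph $2$-coloring} and attaches a pendant path of length $k-3$ to \emph{every} vertex of $S$. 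The symmetry of this attachment is what drives the forcing argument---all pendant paths must be colored identically as a $P_{k-3}$ of distinct colors, $\alpha'(Q)$ is shown to be disjoint from $\alpha'(S\cup P^S)$, $S$ becomes monochromatic, and $Q$ carries exactly the two hypergraph colors. Your single-path padding idea would need a substantially different and more delicate argument to be made sound.
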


Given a $k$-role coloring of $G$, we can verify if it is valid in $n^{\OO(1)}$ time by checking for each vertex $v$ in a color class, if its neighborhood is assigned the same set of colors. Hence, $k$-{\sc role coloring} is in \textsf{NP}.
We prove the \textsf{NP}-hardness in four parts. We separately deal with the cases when $k=3$, $k=4$, and $k \geq 5$. We show a reduction from \textsc{Hypergraph-$2$-coloring} for $k=3$ and $k \geq 5$, and from \textsc{Hypergraph-$3$-coloring} for $k=4$.

For each case, the hypergraph is denoted by $H = (Q_H, S_H)$, where $Q_H$ is a finite set of vertices and $S_H$ is a collection of non-empty subsets of $Q_H$ called hyperedges such that $|s|=3$ for each $s \in S_H$. The corresponding \emph{canonical incidence graph} of the hypergraph $H=(Q_H, S_H)$ is a bipartite graph denoted as $G=(Q, S, E)$ with $Q$ and $S$ 
as the two parts of the  bipartition, and for all $ q \in Q$ and $s \in S$, $(q, s) \in E(G)$ if and only if the vertex corresponding to $ q  \in Q_H$ belongs to the hyperedge $s \in S_H$. 

\begin{mylemma}
$3$-{\sc Role coloring} is \npc on connected bipartite graphs.
\end{mylemma}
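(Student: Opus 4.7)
I would prove \nph by a reduction from \textsc{Hypergraph-$2$-Coloring} on $3$-uniform hypergraphs, which is \nph by Lov\'asz's classical theorem cited in the excerpt. Given an instance $H = (Q_H, S_H)$, I would start from its canonical incidence graph $G_0 = (Q, S, E)$ (which is already bipartite) and attach a small bipartite ``forcing gadget'' to produce the instance $G$, preserving bipartiteness. The target role graph is $R = P_3$ on vertex set $\{1,2,3\}$ with $2$ as the middle vertex: every hyperedge vertex $s \in S$ is intended to receive the middle color $2$, and every element vertex $q \in Q$ is intended to receive one of the endpoint colors $1$ or $3$, according to its class in the hypergraph $2$-coloring.

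Under this intended assignment, the role-coloring condition at an $S$-vertex reads $\alpha(N_G(s)) = N_R(2) = \{1,3\}$, which is exactly the condition that the corresponding hyperedge of $H$ receives both colors, i.e.\ is not monochromatic. At a $Q$-vertex it reads $\alpha(N_G(q)) = \{2\}$, which is satisfied because every neighbour of $q$ lies in $S$. Hence any valid $2$-coloring of $H$ extends (through the gadget) to a valid $3$-role coloring of $G$, giving the easy direction. Conversely, given any valid $3$-role coloring of $G$, if it can be shown to have role graph $P_3$ with $Q$ colored from $\{1,3\}$ and $S$ colored $2$, then its restriction to $Q$ directly yields a valid $2$-coloring of $H$.

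The main obstacle is designing the forcing gadget so that this reverse direction actually holds. The delicate point is that bipartite graphs admit $3$-role colorings with role graphs other than $P_3$; for example, $C_6$ is $K_3$-role colorable, and various role graphs with self-loops are possible a priori. I would build the gadget from small bipartite trees containing pendant vertices: by Observation 2.1, each pendant forces its color class to have degree at most $1$ in $R$, which together with connectedness of $R$ (Observation 2.2) pins down a degree-one vertex in $R$ and rules out $K_3$ as well as any role graph whose every vertex has degree at least $2$. Further structural constraints within the gadget, arguing from the bipartition, would eliminate the remaining variants (notably those with self-loops) and fix which side of the bipartition each color class inhabits. Enumerating these obstructing role graphs and verifying case by case that each is incompatible with the gadget is the technical core of the argument; membership in \textsf{NP} is immediate by checking the role-coloring conditions in polynomial time.
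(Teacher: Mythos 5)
Your overall strategy coincides with the paper's: both reduce from \textsc{Hypergraph-$2$-coloring} on $3$-uniform hypergraphs, both start from the canonical incidence graph $G=(Q,S,E)$, both intend the role graph to be a path on three colors with the middle color on $S$ and the two endpoint colors encoding the $2$-coloring of $Q_H$, and both plan to enforce this with pendant-type bipartite gadgets. Your forward direction is fine. However, the reverse direction --- which you yourself call ``the technical core'' --- is not carried out: you neither specify the gadget nor perform the promised enumeration of unintended role graphs, and that is exactly where the proof lives. The difficulty is not only excluding $K_3$ or role graphs of minimum degree two via Observation~2.1; the main work is showing that in \emph{every} valid $3$-role coloring of the constructed graph, $S$ is monochromatic with the middle color and $Q$ receives only the two endpoint colors (so that each hyperedge sees both). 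The paper achieves this by attaching to every $q\in Q$ a pendant path on two new vertices $q\,b_q\,a_q$, first using connectedness of the role graph to force $a_q$, $b_q$, $q$ to get three distinct colors, and then running a dedicated argument (walking from an $S$-vertex to a neighbour $q'$ and to the pendant path of $q'$) to rule out $\alpha'(s)$ taking the color of $q$ or of $a_q$, concluding $\alpha'(S)=\{2\}$. Nothing in your proposal substitutes for this step.

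Moreover, the gadget cannot be left as ``small bipartite trees containing pendant vertices'', because the choice genuinely matters. For example, attaching a single pendant vertex to each $q\in Q$ fails: coloring all of $Q$ with $1$, all of $S$ with $2$ and all pendants with $3$ is a valid $3$-role coloring (the role graph is a star centered at $1$) regardless of whether $H$ is $2$-colorable, so the backward implication breaks. The paper in fact uses different pendant placements for $k=3$ (length-two pendant paths on the $Q$-side), $k=4$ (single pendants on the $S$-side) and $k\geq 5$ (longer pendant paths on the $S$-side), precisely because these forcing arguments are gadget-specific. Until you fix a concrete gadget and prove the forcing claim for it, the reduction's backward direction is unproven, so the proposal as it stands has a genuine gap, even though the intended route is the same as the paper's.
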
\label{lem:equal3} 

\begin{proof}
Consider the incidence graph $G=(Q, S, E)$ of a hypergraph $H=(Q_H,S_H)$.
To each vertex $ q  \in  Q$, we add a path on two vertices, such that the vertex adjacent to $q$ is labeled $b_q$ and the other vertex adjacent to $b_q$ is $a_q$. For each $q \in Q$, the sub-graph induced on $\{q, b_q, a_q\}$ is a ``pendant path"  and has length $3$. The new graph thus obtained is denoted by $G'$ (refer to Figure \ref{thm1pic}). 

\begin{figure}[!ht]
\begin{center}
\includegraphics[width=0.5\textwidth]{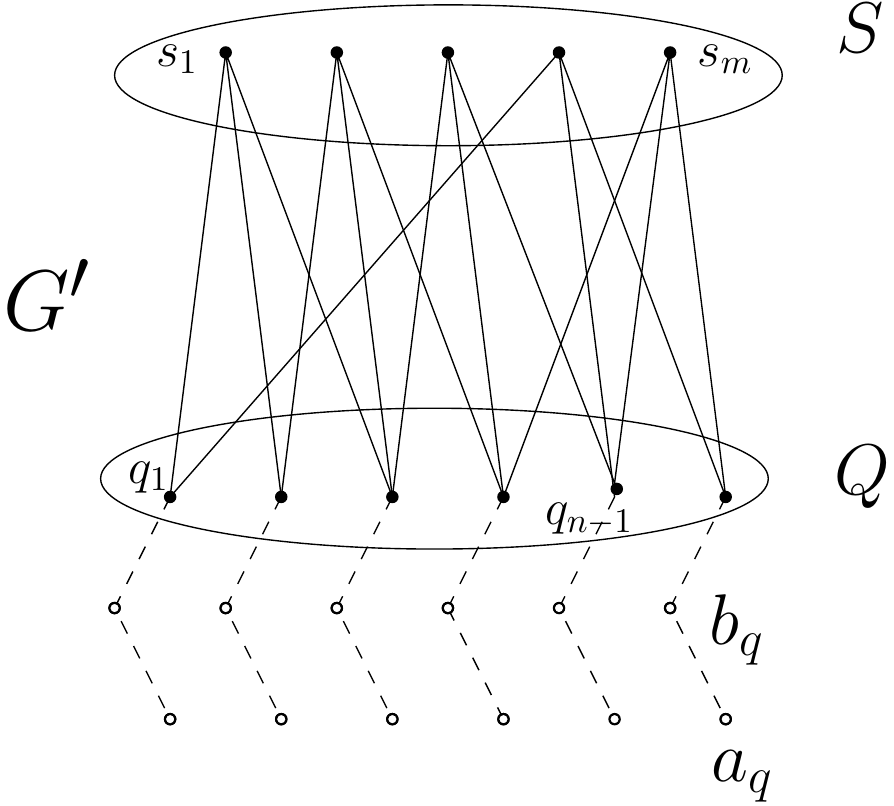}
\caption{$G'$ is a connected bipartite graph constructed from the incidence graph $G$ of a $3$-regular hypergraph $H$. To each vertex $q\in Q$, we add a $P_2$. The vertex adjacent to $q$ is denoted by $a_q$ and the other one by $b_q$.} 
\label{thm1pic}
\end{center}
\end{figure}

We claim that $H$ is a \yes instance of {\sc Hypergraph $2$-coloring} if and only if  $G'$ is a \yes instance of $3$-{\sc Role coloring}. First, assume that $H$ is a \yes instance and $\beta: Q_H \rightarrow \{1, 2\}$ is a 2-coloring of the vertices in $Q_H$.
Then, the role coloring $\alpha: V(G') \rightarrow \{1, 2, 3\}$ can be defined as: for all $q \in Q$, $\alpha(q)=\beta(q)$;  $\alpha(b_q)= 3$; for all $s\in S$ $\alpha(s)= 3$, and $\alpha(a_q) = j, j \in \{1,2\}\setminus \alpha(q)$. The coloring is consistent as each vertex colored $1$ is adjacent to some vertex colored $3$. Similarly, every vertex colored $2$ is adjacent to some vertex colored $3$. Finally, since each hyperedge in $S_H$ is bichromatic, the corresponding vertices in $S$ are each adjacent to some vertex colored $1$ and some vertex colored $2$. Also, each vertex $b_q$ is adjacent to one vertex colored $1$ and one colored $2$. $G'$ is, therefore, a \yes instance of $3$-{\sc role coloring}.

Conversely, suppose that $G'$ is a \yes instance of $3$-{\sc role coloring} with the $3$-role coloring assignment $\alpha'$ and the corresponding role graph $R'$. 
Without loss of generality, assume that for some $q \in Q$, $\alpha'(a_q) = 1$. Then, $\alpha'(b_q) \neq 1$, or else the role graph would be disconnected (with one component isomorphic to a single vertex with a loop). This is a contradiction as $G'$ is connected. So let $\alpha'(b_q)=2$. Now, $\alpha'(q) \neq 1$ and $\alpha'(q) \neq 2$, otherwise role graph would again be disconnected. So, let $\alpha'(q)=3$. 
We prove that $\alpha'(S) = \{2\}$. 
The neighbor of $q$, say $s \in S$ could either be colored $2$ or $3$. Suppose that $\alpha'(s)= 3$. This forces one of the neighbors of $s$ to be colored $2$. Since, no hyperedge in $S_H$ is singleton, let there be a neighbor $q'$ of $s$ such that $\alpha'(q') = 2$. Consider the vertex $b_{q'}$ adjacent to $q'$. $\alpha'(b_{q'}) \neq 1$ as that would force $\alpha'(a_{q'}) = 2$. This is not possible because $a_{q'}$ has no neighbor colored $3$. Similarly, $\alpha'( b_q') \neq 3$. Therefore, in a valid role coloring of $G'$, $\alpha'(s) \neq 3$, for any $s \in S$. Also $\alpha'(s) \neq 1$ for $s \in S$ as $3$ is not adjacent to $1$ in the role graph. Hence, $\alpha'(s) = 2$ for all $s \in S$. We define the coloring of the hypergraph $\beta: Q_H \longrightarrow \{1, 3\}$, as follows as $\beta(q)= \alpha'(q)$. This is a valid $2$-coloring of $H$, as every $s \in S$ is adjacent to at least one vertex colored $1$ and at least one vertex colored $3$. Hence, $H$ is a \yes instance of {\sc Hypergraph-$2$-coloring}.\qed

\end{proof} 


\begin{mylemma}
$4$-{\sc Role coloring} is \npc on bipartite graphs.
\end{mylemma}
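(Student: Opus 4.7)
The plan is to reduce from \textsc{Hypergraph-$3$-coloring} on $3$-uniform hypergraphs, which is \nph by Theorem~\ref{kHypergraphColoring}. Given a connected $3$-uniform hypergraph $H=(Q_H,S_H)$, I would construct $G'$ from the canonical incidence graph $G=(Q,S,E)$ of $H$ by attaching a single pendant vertex $a_s$ to every $s \in S$. The graph $G'$ is connected and bipartite, since the new pendants sit on the $Q$-side of the bipartition. The intended role graph is $R=K_{1,3}$, with the centre labelled $4$ (to be assigned to every $s \in S$) and the three leaves labelled $1,2,3$ (to be used on the $Q$-vertices and the pendants).

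For the forward direction, given a non-monochromatic $3$-coloring $\beta\colon Q_H \to \{1,2,3\}$ of $H$, I would set $\alpha(q)=\beta(q)$, $\alpha(s)=4$, and, for each hyperedge $s=\{q_1,q_2,q_3\}$, pick $\alpha(a_s)$ so that $\{\alpha(a_s),\beta(q_1),\beta(q_2),\beta(q_3)\}=\{1,2,3\}$. Such a choice exists because the $\beta(q_i)$ already span at least two leaf colours (as $s$ is non-monochromatic), so $a_s$ is set to the missing third colour, or chosen arbitrarily when all three colours already appear among the $q_i$. A direct check verifies that $\alpha$ is a valid $4$-role coloring of $G'$ whose role graph is $K_{1,3}$, and that all four colours are used.

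For the backward direction, starting from any $4$-role coloring $\alpha$ of $G'$ with role graph $R$, I first observe that $R$ is connected on four vertices by Observation~2.2 and by surjectivity. Since each $a_s$ has degree $1$, the colour $\alpha(a_s)$ has exactly one neighbour in $R$; that neighbour cannot be $\alpha(a_s)$ itself, because a self-loop at $\alpha(a_s)$ would, by the connectivity of $G'$, propagate this single colour to every vertex and thereby contradict surjectivity. So $R$ contains a vertex of degree $1$ without a self-loop. A short case analysis over the connected $4$-vertex graphs that contain such a vertex---namely $K_{1,3}$, $P_4$, the paw, together with the self-loop augmentations of these at the higher-degree vertices---rules out everything except $K_{1,3}$. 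Once $R \cong K_{1,3}$, the role conditions force $\alpha(s)=4$ for every $s$ and $\alpha(q)\in\{1,2,3\}$ for every $q$, and the equation $\alpha(N_G(s))=\{1,2,3\}$ forces the three $Q$-vertices inside each hyperedge to not all share a single colour, since the lone pendant $a_s$ cannot supply two missing colours by itself. Therefore $\beta(q):=\alpha(q)$ gives a non-monochromatic $3$-coloring of $H$.

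The main obstacle will be the role-graph case analysis in the backward direction. Each individual case is short---for instance, $P_4 = 1{-}2{-}3{-}4$ forces $\alpha(s) \in \{2,3\}$, after which the $q$-vertices, whose only neighbours are $s$-vertices coloured in $\{2,3\}$, admit no colour that simultaneously matches their own neighbourhood equation $\alpha(N_G(q))=N_R(\alpha(q))$ and provides the colours demanded by the $s$-constraint---but cleanly enumerating the handful of candidate role graphs and their self-loop variants is what carries most of the weight of the proof.
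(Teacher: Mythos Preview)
Your construction and forward direction are identical to the paper's. The backward direction is correct as sketched but is organised differently from the paper. Instead of enumerating the connected $4$-vertex role graphs (with possible self-loops) that contain a loop-free degree-$1$ vertex, the paper first proves directly that $\alpha'(Q)\cap\alpha'(S)=\emptyset$: if $\alpha'(s)=\alpha'(q)=x$ then the pendant $p^s$ gets some $y\neq x$ with $N_{R'}(y)=\{x\}$; this forces $q$ to have an $S$-neighbour $s'$ coloured $y$, and then $p^{s'}$ is forced to be coloured $x$, contradicting $\deg_{G'}(p^{s'})=1<\deg_{R'}(x)$ (or, in the degenerate case $N_{R'}(x)=\{y\}$, contradicting surjectivity onto four colours). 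With $Q$ and $S$ colour-disjoint, the paper then case-splits on $|\alpha'(Q)|\in\{1,2,3,4\}$, which is four short cases and never requires listing self-loop variants of $P_4$, $K_{1,3}$, and the paw.

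Both routes land on the same conclusion ($R'\cong K_{1,3}$ with $S$ at the centre), and yours is perfectly sound; the paper's organisation just buys a shorter write-up by replacing the role-graph enumeration with a single disjointness claim. Two small points in your version: once $R'\cong K_{1,3}$ you should say $\alpha(s)$ is the \emph{centre} rather than the label $4$; and $|\alpha'(Q)|$ may equal $2$, in which case $H$ is still $3$-colourable (recolour one vertex, as the paper does in its $|\alpha'(Q)|=2$ case).
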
\label{lem:2}

\begin{proof}
Let $H$ be an instance of {\sc Hypergraph $3$-coloring} such that $H$ is $3$-regular. The incidence graph $G=(Q, S, E)$ is a connected bipartite graph, with the bipartition $(Q, S)$. To construct the graph $G'$ from $G$, we add a pendant vertex $p^s$ to every vertex $s \in S$ (refer to Figure \ref{thm2pic}). Let $P^S=\{p^s\mid s\in S\}$.

\begin{figure}[!ht]
\begin{center}
\includegraphics[width=0.5\textwidth]{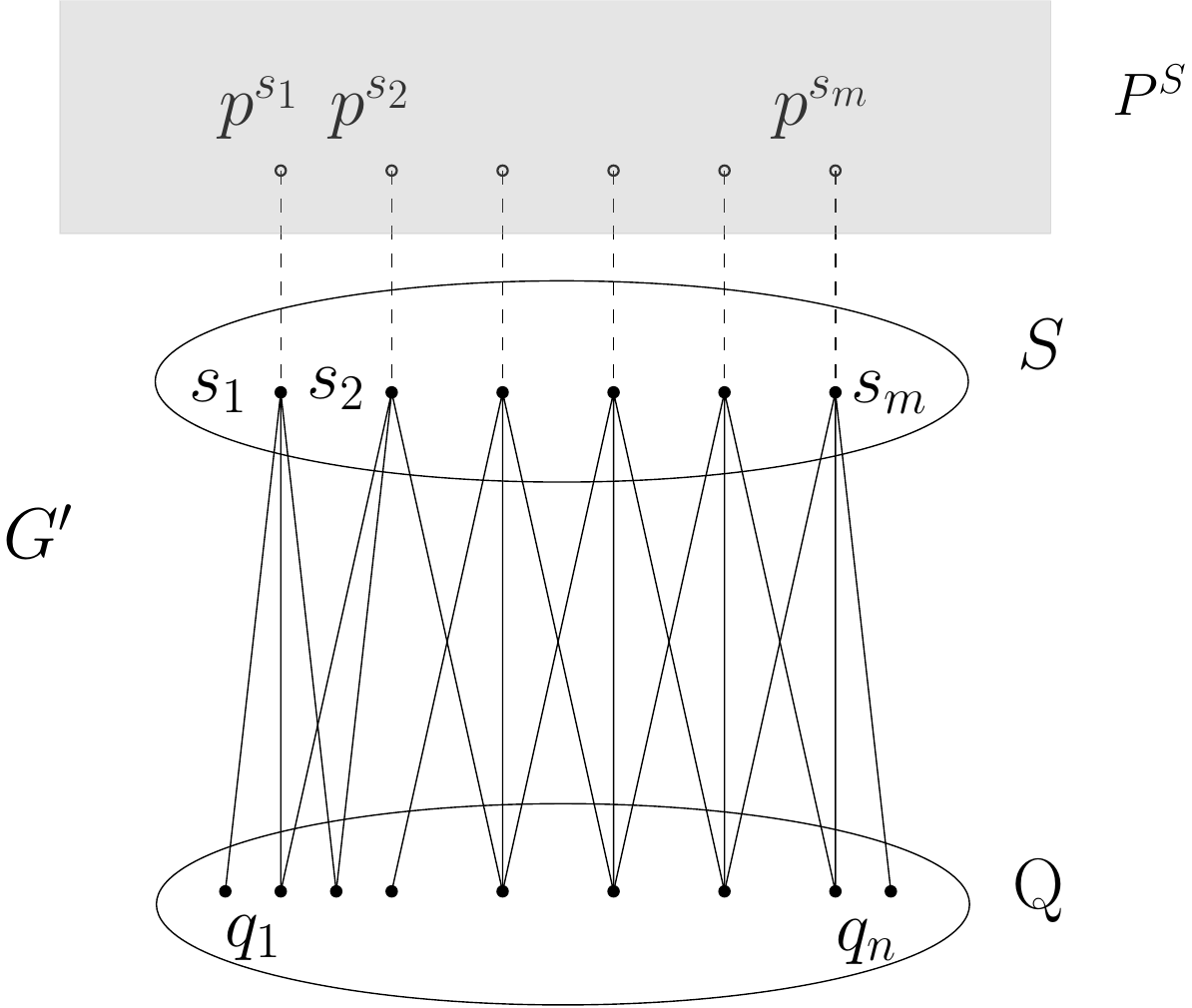}
\caption{$G'$ shown in the figure is a connected bipartite graph with the bipartition $(Q \cup P^S, S)$. It is constructed from the incidence graph $G$ of the hypergraph $H$ by adding a pendant vertex to each vertex in $S$.} 
\label{thm2pic}
\end{center}
\end{figure}

We now claim that $H = (Q_H, S_H)$ admits a $3$-coloring if and only if $G'$ admits a $4$-role coloring. Suppose that $H$ has a $3$-coloring, $\beta$ : $Q_H \rightarrow \{1, 2, 3\}$. 
Then the role coloring for $G'$ is defined as: for each $q \in Q$, $\alpha(q)$=$\beta(q)$; for all $s \in S$, $\alpha(s)= 4$ and for each $s$, if $|\alpha(N_{Q}(s))| = 2$, then $\alpha(p^{s})= x \in \{1, 2, 3\}$ such that $x \notin \alpha(N_{Q}(s))$. Otherwise, arbitrarily assign any of the three colors to $p^s$.  Corresponding to $\alpha$, the role graph $R$ is defined as $V(R)= \{1, 2, 3, 4\}$ and $E(R)= \{(1,4), (2,4), (3,4)\}$. As $H$ is a \yes instance of {\sc Hypergraph $3$-coloring}, every vertex $s \in S$ colored $4$ is adjacent to vertices colored $1$, $2$ and $3$ in $Q \cup P^s$. Thus, $\alpha$ is consistent with $R$.

Conversely, suppose that $G'$ has a valid $4$-role coloring $\alpha': V(G) \rightarrow \{1, 2, 3, 4\}$, and the corresponding role graph is $R'$.

\paragraph{ Claim~1.}
{\em In any valid $4$-role coloring of $G'$, $\alpha'(S) \cap \alpha'(Q) = \emptyset$.}

\noindent\textit{Proof.} Towards a contradiction, let us suppose that there exists an $s\in S$ such that $\alpha'(s)=\alpha'(q) = x$, for some $q \in Q$ and $x\in \{1,2,3,4\}$. Then $\alpha'(p^s)\neq x$ or else $R'$ would be disconnected with a component containing $x$ with a loop. By Observation~2.2, this is not possible. Therefore, let $\alpha'(p^s)=y$, such that $y \in \{1,2,3,4\}\setminus\{x\}$. For $\alpha'$ to be valid, $q$ must have a neighbour $s'\in S$ such that $\alpha'(s')=y$. Consider $p^{s'}$. Since $N_{R'}(y)=\{x\}$, $\alpha'(p^{s'})$ must be equal to $x$. However, by Observation~1.1, this is not possible because $deg_{R'}(x)>1$. \hfill $\qed$\\

\paragraph{ Claim~2.}
{\em When $\vert \alpha'(Q)\vert = 1$, $G'$ does not admit any valid $4$-role coloring.}

\noindent\textit{Proof.} Towards a contradiction, assume that $Q$ is a  monochromatic set. Without loss of generality let $\alpha'(q)=1$, for each $q \in Q$.

By claim~1, no vertex in $S$ is assigned the color $1$. We show that none of the pendant vertices in $P^s$ gets the color $1$. Suppose that there exists a vertex $p^{s}$ of $P^s$ colored $1$. Suppose that $\alpha'(s)$ = $2$. Since the vertices colored $1$ are only adjacent to vertices colored $2$, $S$ must be monochromatic with the color $2$, as by our assumption $Q$ is monochromatic with the color $1$. Therefore, $P^s$ must be monochromatic with the color $1$ assigned to its vertices as the vertex $s$ colored $2$ is only adjacent to vertices colored $1$, a contradiction to $\alpha'$ being a $4$-role coloring of $G'$. Therefore, it follows that $1 \notin  \alpha'(P^s \cup S)$. 
 
We now prove that $P^s$ must be monochromatic.
Towards a contradiction, suppose that there exist vertices $p^{s}$ and $p^{s'}$ colored $2$ and $3$, respectively. Then $\alpha'(s) \neq 2$ and $\alpha'(s') \neq 3$, or else every vertex colored $2$ or $3$ could only have neighbors of the same color. As every $s \in S$ has a neighbor in $Q$ colored $1$, we know that that is not possible. This implies that $s \in S$ is adjacent to vertices colored $1$ and $2$, whereas the $s'\in S$ has neighbors colored $1$ and $3$. Suppose that we color $s$ with $4$, then there is no color that can be assigned to $s'$. This is a contradiction to $G'$ admitting a valid $4$-role coloring. Hence, $P^s$ must be monochromatic.

Let $\alpha'(P^s)$=$\{2\}$. That forces $S$ to be monochromatic with a third color, say $3$, which again contradicts the assumption that $G'$ admits a valid $4$-role coloring. Therefore, when $Q$ is monochromatic, $G'$ does not admit any $4$-role coloring. \hfill $\qed$\\

\paragraph{ Claim~3.}
{\em When $\vert\alpha(Q)\vert= 2$, $G'$ has a valid $4$-role coloring.}

\noindent\textit{Proof.} Let the colors assigned to vertices in $Q$ by $\alpha'$ be $1$ and $2$.

By claim~1, $\alpha'(S) \cap \{1,2\}=\emptyset$. Next, we show that the pendants in $\alpha'(P^s) \cap \alpha'(Q)= \emptyset$. Suppose that there exists a pendant vertex $p^s$ colored $1$. Let its neighbor $s \in S$ be colored $3$. Also, $3$ must be adjacent to a different color, say $2$,  otherwise the role graph would be disconnected with a component containing just $1$ and $3$. This means that no pendant receives the color $3$ as the degree of $3$ in $R'$ is at least equal to 2. If there exists a pendant colored $2$, then its neighbor must be colored $3$ as well. So, $N_{R'}(1)= N_{R'}(2) = \{3\}$ and $N_{R'}(3)= \{1, 2\}$. All the vertices in $Q$ are adjacent only to vertices in $S$. Hence, $S$ is monochromatic with the role $3$. This is a contradiction to $\alpha'$ being a $4$-role coloring. This implies that $1 \notin \alpha(P^s)$ and $ 2 \notin \alpha'(P^s)$. This proves our claim that $\alpha(Q) \cap \alpha'(P^s)$ = $\emptyset$. 

Thus, the only possible role coloring in this case is  $\alpha'(Q) \in \{1,2\}$, $\alpha'(S)= \{3\}$, and $\alpha'(P^s)= \{4\}$. So, every vertex in $S$ is forced to be adjacent to at least one vertex having color $1$ and at least one vertex having color $2$. 

The corresponding coloring for the hypergraph $H$ is as follows: for some $q' \in Q_H, \beta(q')= 3$ and $\forall q \in Q_H \setminus \{q'\}, \beta(q) = \alpha(q)$. Clearly, no hyperedge is monochromatic and $|\beta(Q_H)| = 3$. \hfill $\qed$\\

\paragraph{ Claim~4.} {\em When $\vert \alpha'(Q)\vert =3, G'$ admits a valid $4$-role coloring.}

\noindent\textit{Proof.} Suppose that the colors assigned to $Q$ by $\alpha'$ are $\{1,2,3\}$. 

By Claim~1, $\alpha'(S) \cap \alpha'(Q) = \emptyset$. Therefore, $\alpha'(S)=\{4\}$ and $R'=(\{1,2,3,4\},$ $ \{(1,4), (2,4), (3,4)\})$. We propose the following coloring for the hypergraph $H$. For every $q \in Q_H$, $\beta(q) = \alpha'(q)$ such that $q \in Q$ is the corresponding vertex in $G'$. \hfill $\qed$\\ 

\paragraph{ Claim~5:}{\em When $\vert \alpha'(Q)\vert = 4$, no valid $4$-role coloring of $G'$ exists.} 

\noindent\textit{Proof} By Claim~1, $\alpha'(Q)\cap \alpha'(S)=\emptyset$. Therefore, there is no color that can be assigned to $S$ in a valid $4$-role coloring of $G'$.  \hfill$\qed$\\ 

\hfill $\qed$
\end{proof}
 

\begin{mylemma} 
 $k$-{\sc role coloring} is \npc on connected bipartite graphs, for any fixed $k \geq 5$. 
\end{mylemma}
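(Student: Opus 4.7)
The plan is to reduce from \textsc{Hypergraph-2-coloring} on 3-uniform hypergraphs (\nph by Lov\'asz~\cite{Hypergraph-2-coloring}), extending the construction of Lemma~3.1 with uniform ``tail'' gadgets that force the extra $k-3$ roles to appear. Given a 3-uniform hypergraph $H=(Q_H,S_H)$ with incidence graph $G=(Q,S,E)$, I would first form $G'$ exactly as in Lemma~3.1 by attaching a pendant $a_q - b_q - q$ to each $q\in Q$. Next, to every vertex $v\in S\cup\{b_q : q\in Q\}$ I would attach a fresh pendant path $v - p^v_4 - p^v_5 - \cdots - p^v_k$ of $k-3$ new vertices, and call the resulting graph $G_k$. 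It is connected, bipartite (the new paths extend the bipartition of $G'$), and constructible in polynomial time. The intended role graph $R$ is the tree on $[k]$ with edges $\{(1,3),(2,3)\}\cup\{(i,i+1):3\leq i\leq k-1\}$.

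For the forward direction, starting from a valid 2-coloring $\beta:Q_H\to\{1,2\}$ of $H$, I would set $\alpha(q)=\beta(q)$, $\alpha(a_q)=3-\beta(q)$, $\alpha(b_q)=\alpha(s)=3$ for every $q$ and $s$, and $\alpha(p^v_i)=i$ for each $v$ and each $i\in\{4,\ldots,k\}$. A short class-by-class check shows $\alpha$ is a valid $k$-role coloring agreeing with $R$; the only non-routine verification is that each $s\in S$ satisfies $\alpha(N(s))=\{1,2,4\}$, which is precisely the bichromaticity of every hyperedge under $\beta$.

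The main obstacle is the backward direction. Given a valid $k$-role coloring $\alpha'$ of $G_k$ with role graph $R'$, I would argue that $R'$ must coincide with $R$ up to relabelling, after which Lemma~3.1 applied to the core $G'$ yields a valid 2-coloring of $H$. The structural argument proceeds as follows: each $p^v_k$ is a pendant, so by Observation~2.1 its role is a leaf of $R'$; tracing each tail inward from its tip and combining Observation~2.1 with the connectivity of $R'$ (Observation~2.2) and the surjectivity of $\alpha'$, one shows by induction on depth that every tail labels a pendant path in $R'$ on $k-3$ pairwise distinct roles disjoint from the colors used on the core. The delicate point is ruling out ``folding'' colorings, in which a tail cycles back on itself or reuses a core role; this is handled by a case analysis in the spirit of the Claims of Lemma~3.2, crucially exploiting that the tails are attached uniformly at \emph{every} color-$3$ vertex, so that a divergence between the color sequences of two tails would force $|V(R')|>k$ and contradict surjectivity. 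Once this structural claim is established, the core $G'$ uses exactly three colors playing the roles $1,2,3$, and the argument of Lemma~3.1 applies verbatim.
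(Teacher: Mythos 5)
Your overall strategy is the same as the paper's (reduce from \textsc{Hypergraph-$2$-coloring} on $3$-uniform hypergraphs and attach pendant paths so that the intended role graph is the ``broom'' with edges $(1,3),(2,3)$ and the path $3-4-\cdots-k$), although your gadget is heavier: the paper drops the $a_q,b_q$ gadget of Lemma~3.1 entirely and attaches a single pendant path of length $k-3$ to each hyperedge vertex $s\in S$ only, whereas you keep the $a_q,b_q$ paths and additionally attach tails to every $s$ and every $b_q$. Your forward direction is fully specified and checks out (each $s$ and each $b_q$ indeed sees exactly $\{1,2,4\}$, the tails realize $4,\ldots,k$).

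The genuine gap is the backward direction, which is exactly the hard part and is only sketched. You assert, without proof, the structural claim that in any valid $k$-role coloring the tails receive $k-3$ pairwise distinct roles disjoint from the core roles and that the core then uses exactly three roles; the only argument offered for excluding ``folded'' colorings is that a divergence between two tails ``would force $|V(R')|>k$,'' but that is not how foldings fail: a folded or colour-reusing tail tends to produce too few roles or a disconnected/unintended role graph, so surjectivity alone does not give the contradiction, and one must actually carry out the degree/connectivity case analysis (this is what the paper's Claims~1--3 of its Lemma for $k\geq 5$ do: distinctness of colours along each tail, $\alpha'(Q)\cap\alpha'(S\cup P^S)=\emptyset$, and $Q$ not monochromatic). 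Moreover, your final step, ``the argument of Lemma~3.1 applies verbatim,'' cannot be literally true: in your graph $G_k$ the vertices $b_q$ and $s$ have extra tail neighbours (degrees $3$ and $4$ instead of $2$ and $3$), and Lemma~3.1's case analysis repeatedly uses that $b_q$'s only neighbours are $a_q$ and $q$ and that $s$'s only neighbours lie in $Q$; with the tails present, the forced colourings and the derived role graph are different (in particular the roles of $b_q$ and $s$ must now also be adjacent to the first tail colour), so that analysis must be redone for your construction. Nothing you write is evidently false about the reduction itself, but as it stands the soundness direction is unproven, and proving it is where all the work of the lemma lies; the paper's choice of attaching tails only to $S$ keeps that case analysis considerably shorter.
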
\label{lem:>5rcol} 

\begin{proof}

We show a reduction from {\sc Hypergraph-2-coloring}. Let $H=(Q_H, S_H)$ be an instance of {\sc Hypergraph-2-coloring} such that $H$ is $3$-regular. Let $G= (Q\cup S, E)$ be its canonical incidence graph. We create the instance $G''$ of {\sc $k$-Role coloring} by attaching to each $s \in S$ a ``\emph{pendant path}" $p^s$ = $\{p^s_1, ..., p^s_{k-3}\}$. Each path $p^s$ is attached to the vertex $s$ through the edge ($p^s_{k-3}, s)$. The set of all the pendant paths is denoted by $P^S$. Refer to Figure \ref{k>5pic}.

\begin{figure}[!ht]
\begin{center}
\includegraphics[width=0.5\textwidth]{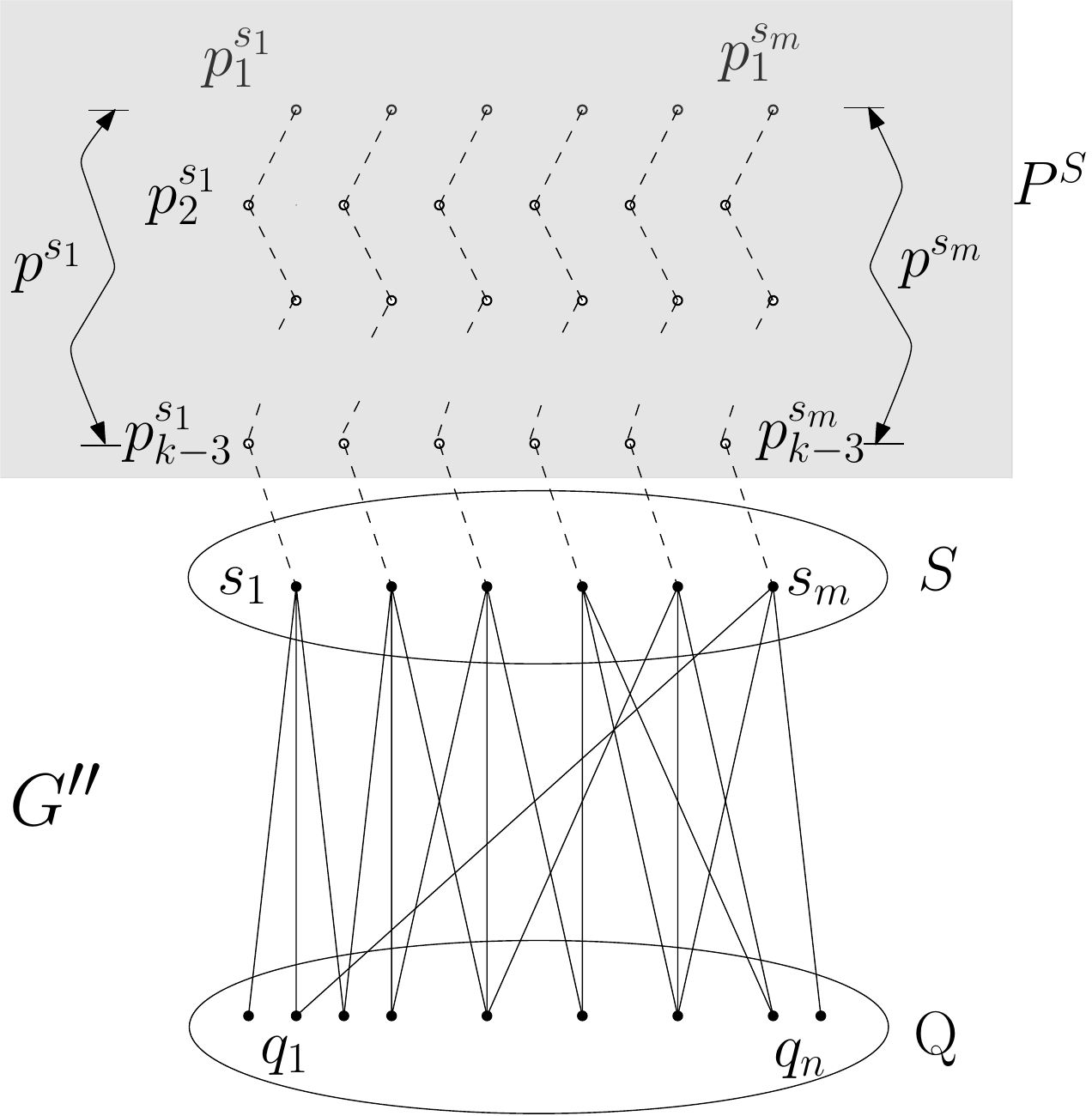}
\caption{We show the graph $G''$ in this figure which is a connected bipartite graph. The set $Q$ contains a vertex $q$ for every vertex $q \in Q_H$. The set $S$ contains a vertex $s$ for each hyperedge $s \in S_H$. To each $s \in S$ we attach a path of length $k-3$ denoted by $p^s$.}
\label{k>5pic}
\end{center}
\end{figure}

We prove that $H$ has a $2$-coloring if and only if $G''$ admits a $k$-role coloring. Suppose that the hypergraph $H$ has a $2$-coloring, $\beta$ : $Q_H \longrightarrow \{1, 2\}$. Let the role coloring for $G''$ be the following: $\alpha(q)$ =$\beta(q)$, for each $q \in Q$;  $\alpha(s)$= $3$, for all $s \in S$; $\alpha({p^{s}}_{k-3})= 4$, for all $s \in S$; $\alpha({p^{s}_j})$= $k-j+1$, where $ 1 \leq j \leq k-4$.
This gives us a $k$-role coloring of $G''$ with the corresponding role-graph $R$ defined as: $V(R)=\{1, 2, \dots, k\}$ and  $E(R) = \{ (1,3), (2,3)\} \cup\{(j, j+1) \mid  3 \leq j \leq k-1\}$.

We  now prove the converse. We assume that $G''$ has a valid $k$-role coloring $\alpha'$.

\paragraph{Claim~1:}
{\em In any valid $k$-role coloring of $G''$, each vertex of any pendant path $p^s$ gets a distinct color.}

\noindent\textit{Proof.} Assume the contrary. Let $p^s_i$ be the first vertex such that its neighbour $p^s_{i+1}$ gets the same color. For convenience, let us assume that $\alpha'(p^s_j) = j$ for $1 \leq j\leq i$. Then, $\alpha'(p^s_{i+2}) = i-1$ because any vertex colored $i$ needs a neighbour coloured $i-1$ and the only neighbour of $p^s_{i+1}$ other than $p^s_{i}$ is $p^s_{i+2}$. Similarly, $\alpha'(p^{i+3})= i-2$, and so on. Therefore the role graph $R'$ corresponding to $\alpha'$ would be disconnected with a component isomorphic to a $P_i$ with a loop on an end point. This is not possible as $G''$ is connected. \hfill $\qed$\\

\paragraph{Claim~2:}
{\em $\alpha'(Q) \cap \alpha'(S \cup P^S) =\emptyset $}.

\noindent\textit{Proof.} Suppose that there exists a vertex $s \in S$ such that $\alpha'(s) = \alpha'(q) = x$ for some $q\in Q$ and some $x \in \{1, 2, \ldots k\}.$ Consider its pendant path $p^s$. By the previous claim, there must be $k-3$ distinct colors given to the vertices of $p^s$. Without loss of generality, let these colors be $1, \ldots k-3$ such that $\alpha'(p^s_j)=j$ for all $1\leq j\leq k-3$. Therefore, $x$ must be adjacent to $k-3$ in $R'$. If $x = k-3$ then $s$ must have a neighbour $q'\in Q$ colored $k-4$. $q'$ in turn must be adjacent to some $s'\in S$ colored $k-5$. Consider the pendant path $p^{s'}$. Since $k-5$ is only adjacent to either $k-4$ or $k-6$, $p^{s'}_{k-3}$ must get one of the two colors. In the former case $\alpha'(p^{s'}_{2})=1$ and since $1$ is adjacent only to $2$, there is no color that can be given to $p^{s'}_{1}$. In the latter case, $\alpha'(p^{s'}_3)=1, \alpha'(p^{s'}_2) = 2$ but there is no valid color that can be given to $p^{s'}_1$ as $p^{s'}_2$ needs a neighbour colored $3$ but $deg_{R'}(3)>1$. 

Therefore, let $x = k-2$. Since $\alpha'(q)= k-2 $, $q$ needs at least one neighbour $s'\in S$ colored $k-3$ as $x$ is adjacent to $k-3$ in $R'$. The pendant path attached to $s'$ gets the colors $k-4, \ldots 1$ in order such that $\alpha'(p^{s'}_{k-3})=k-4$ and $\alpha'(p^{s'}_2)=1$. There is no valid color that can be given to $p^{s'}_1$. 

Similarly if there exists a vertex $q \in Q$ such that $\alpha'(q) \in \{1, 2, \ldots k-3\}$, then its neighbour $s \in S$ must get a color $i \in \{1, 2, \ldots k-3\}$ such that $i= \alpha'(q)+1$ or $i=\alpha'(q)-1$. As above, there would be no valid colour that can be given to $p^s_1$. This is a contradiction to the validity of $\alpha'$. \hfill$\qed$\\

\paragraph{Claim~3:}
{\em In any valid $k$-role coloring of $G''$, $Q$ cannot be monochromatic.}

\noindent\textit{Proof.} Assume the contrary. Without loss of generality, let $\alpha'(Q) = k$. By Claim~2, $\alpha'(Q) \cap \alpha'(S \cup P^S)=\emptyset$. Therefore, let $s \in S$ get the color $k-1$. We shall show that $S$ is monochromatic and $\alpha'(S)=k-1$. None of the vertices in $p^s$ could get the color $k$ by claim~2. So, $p^s_{k-3}$ could either get $k-1$ or a distinct color $k-2$. By claim~1, each vertex in $p^s$ gets a distinct color. Let the set of colors given to $p^s$ be $\{k-1, k-2, \ldots 3\} $ in the former case and $\{k-2, \ldots 2\}$ in the latter. Suppose that $q$ has another neighbour $s_1 \in S$ colored $1$. Then, $p^{s_1}_{k-3}$ could get color $2$ in the former case, but there would be no valid color that could be given to $p^{s_1}_{k-4}$ as none of the other colors are adjacent to $2$. In the latter case, there would be no valid color that could be given to $p^{s_1}_{k-3}$. Therefore, $S$ must be monochromatic with $k-1$. $P^S$ gets additional $k-3$ colors and the total number of colors assigned are $k-1$. This is a contradiction to $\alpha'$ being a $k$-role coloring. \hfill $\qed$\\ 

\hfill $\qed$

\end{proof} 

\section{3-role coloring bipartite chain graphs} \label{sec:poly}
   
\textbf{ Definition}\cite{chain,bichain} {\it A bipartite graph is a {\em chain graph} if and only if the neighborhood of the vertices of each partition can be linearly ordered with respect to inclusion. Equivalently, a bipartite graph is a chain graph if and only if it is $2K_2$-free.}
 
A {\em universal vertex}\footnotemark[3] is a vertex in one partition of a bipartite graph that is adjacent to every vertex in the other partition.
 
\footnotetext[3]{Our definition of ``{\em universal vertex}" is different from the standard definition.}
 
Throughout this section, we assume $G=  (X, Y, E)$ is a bipartite chain graph. $U_X$ and $U_Y$ are the sets of universal vertices in the partitions $X$ and $Y$, respectively. Likewise, $P_X$ and $P_Y$ are sets of pendant vertices in the partitions $X$ and $Y$. Note that $P_X$ and/or $P_Y$ may be empty. We also assume $|V(G)| \geq 3$ or else, $G$ is trivially a \no instance. In the rest of this section, we aim to prove the following theorem. 

\begin{mytheorem}
A bipartite chain graph $G=(X,Y,E)$, where $|X \cup Y| \geq 3$, has a valid $3$-role coloring if and only if any of the following is true:

\begin{enumerate}[label=\arabic*., font=\normalfont]
	\item \label{i. disconnected} $G$ is not connected
	\item \label{i.x=1} $|X|=1$ or $|Y|=1$
	\item \label{i.x=2,ux=2} $|X|=2$ (or $|Y|=2$) and both the vertices of $X$ (or $Y$) are universal vertices.
	\item \label{i.x=2,y>2} $|X|=2$, $|Y|>2$ and $|Y \setminus P_Y|>1 $. (symmetrically, $|Y|=2$, $|X|>2$ and $|X\setminus P_X|>1$)
	\item \label{i.x>2,y>2} $|X| \geq 3$ and $|Y| \geq 3$. 
\end{enumerate}

Furthermore, given any bipartite chain graph we can check in polynomial time whether it has a valid $3$-role coloring.
\end{mytheorem}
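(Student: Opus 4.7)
The proof of the biconditional proceeds by case analysis on the five structural conditions, with a final observation that each is testable in $O(n+m)$ time by computing connectivity, partition sizes, and the sets of pendant and universal vertices.

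Sufficiency for cases (1)--(4) is essentially routine. A disconnected chain graph is, by $2K_2$-freeness, a single non-trivial component together with isolated vertices, so I would $2$-color the component by its bipartition and use $3$ on an isolated vertex. Condition (2) gives a star (center $\to 1$, leaves split between $2$ and $3$); condition (3) gives $K_{2,|Y|}$ (both $X$-vertices $\to 1$, $Y$ split between $2$ and $3$). For condition (4), since a connected chain graph has a universal vertex on each side, $x_2$ is universal and $N(x_1)=Y\setminus P_Y$; coloring $X\to 1$ and splitting $Y\setminus P_Y$ between $2$ and $3$ (with pendants of $Y$ colored arbitrarily) yields role graph $K_{1,2}$ with center $1$.

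Condition (5) is the technical core and I would subcase on pendants. If $P_X=\emptyset$ (symmetric if $P_Y=\emptyset$), take $X\to 1$, the universal $y^*\to 3$, and the rest of $Y\to 2$; every $x\in X$ is non-pendant so sees both a $3$-colored neighbor ($y^*$) and a $2$-colored neighbor. Otherwise pendants exist on both sides; the chain property then forces $|U_X|=|U_Y|=1$ and forces $M_X, M_Y$ to be simultaneously empty or simultaneously non-empty. When both are non-empty, I color universals $\to 2$, middles $\to 1$, pendants $\to 3$, with role graph having edge set $\{(1,1),(1,2),(2,2),(2,3)\}$; when both are empty ($G$ is a ``double star''), color universals $\to 2$ and split each side's pendants between $1$ and $3$ (feasible since $|X|,|Y|\geq 3$ forces $|P_X|,|P_Y|\geq 2$).

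For necessity, the negation of all five conditions forces $G$ to be connected with, WLOG, $|X|=2$ (else (5)), $x_2$ universal and $x_1$ non-universal (else (3)), and either $|Y|=2$ (giving $G\cong P_4$) or $|Y|\geq 3$ with $|Y\setminus P_Y|=1$ (a ``broom'' with central edge $x_2 y^*$ and pendants at each end). For each of these two explicit graphs, an exhaustive case analysis over the finitely many $3$-vertex role graphs (simple and loop-augmented) shows that pendant neighborhoods force enough same-color identifications to collapse the image of the coloring to at most $2$ colors, contradicting validity. The main obstacle is the sufficiency argument for case (5) with pendants on both sides, where the naive star role graph $K_{1,2}$ fails because a pendant on one side sees only one color while non-pendants on that side would need to see two; the fix requires role graphs with self-loops, and its correctness rests on a clean stratification lemma: in a chain graph a middle-degree vertex on one side has its neighborhood contained in $\{\text{universal}\}\cup M_{\text{other side}}$, never meeting a pendant. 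This is precisely what keeps the color classes $M$ and $P$ separate in the neighborhoods of color $1$.
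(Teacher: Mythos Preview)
Your proof is correct and follows essentially the same architecture as the paper's: explicit colorings for each of the five cases (with the same stratification into universals/middles/pendants for case~(5), and the same looped role graph up to relabeling), followed by an exhaustive check that the residual graphs admit no $3$-role coloring. Two remarks. First, your necessity argument is in fact more complete than the paper's: the paper deduces from the failure of (1)--(5) that $|X|=|Y|=2$ and treats only $G\cong P_4$, whereas you correctly also isolate and dispose of the ``broom'' ($|X|=2$, $|Y|>2$, $|Y\setminus P_Y|=1$). Second, one small gap: in case~(1) you assert that a disconnected chain graph is a single non-trivial component plus isolated vertices, but it could be entirely edgeless; you should add the one-line observation that an edgeless graph on $\geq 3$ vertices is trivially $3$-role colorable with the empty role graph.
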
\label{thm:bichain}
\begin{proof}
	We shall first prove that in each of the cases above, $G$ has a valid $3$-role coloring. Thereafter, we show that $G$ does not admit a valid $3$-role coloring in any other case.
	
\begin{enumerate}[label=\arabic*., font=\normalfont]
	\item We prove that there exists a valid $3$-role coloring of $G$, when it is disconnected and $|G|\geq 3$. Note that edges are present in at most one of the connected components of a disconnected bipartite chain graph, otherwise it would contain a $2K_2$. So there are two cases, first when $G$ is edgeless. As we have assumed that there are at least three vertices, thus the graph can be trivially $3$-role colored with the corresponding role graph being an edgeless graph on $3$ vertices. In the second case, exactly one component has edges. We color the connected component with edges using colors $1$ and $2$. This can be done as the graph is bipartite. Color the rest of the isolated vertices with color $3$. Note that this is a valid $3$-role coloring with role graph $R$ such that $V(R)=\{1,2,3\}$ and $E(R)= \{(1,2)\}$.
	
	\item Henceforth, we assume that $G$ is connected. As there are no isolated vertices in $G$, the linear ordering on the neighborhood of vertices in each partition ensures that the largest neighborhood contains all the vertices in the partition. Hence, each partition has at least one universal vertex. We claim that $G$ has a valid $3$- role coloring if $|X|=1$. The only vertex $v \in X$ is universal. Let $\alpha(v)=1$ and for a vertex $w \in Y$ $\alpha(w)=2$. Assign color $3$ to the rest of vertices in $Y$. It is easy to check that this coloring agrees with the role graph $R=(\{1,2,3\}, \{(1,2), (1,3)\})$. We can argue symmetrically for $Y$. 
	
	\item Assume that $|X|=2$ and $X=\{u,v\}$. We show that $G$ has a valid $3$-role coloring if both the vertices in $X$ are universal. Assign color $1$ to $u$, color $2$ to $v$ and color $3$ to vertices of $Y$. The corresponding role graph is defined as $R=(\{1,2,3\}, \{(1,3),(2,3)\})$.
	
	\item  Suppose that $|X|=2$ and $|Y|>2$, such that there exist at least one vertex in $Y$ that is not a pendant vertex. As before, let $X=\{u,v\}$ such that $u$ is universal, bur $v$ is not. First, let $P_X = \emptyset = P_Y$. Then, we define the role coloring assignment as $\alpha(u)= 1$, $\alpha(Y)= \{2\}$; and $\alpha(v)= 3$. Each vertex in $Y$ has a neighbor colored $1$, as $u$ is universal. Since there are no pendants in $G$, each of them has at least one other neighbor in $X$. This ensures that each vertex colored $2$ also has a neighbor colored $3$. 
	
	Next, we assume that $P_Y$ is non-empty. Let $Y= P_Y \cup T$ where $T$ is the set of vertices of degree two. We prove that $G$ has a valid $3$-role coloring if $|T| \geq 2$. Let $\alpha(a)= 1$, for some $a \in P_Y$. We cannot assign $1$ to $u$, due to the connectivity constraint of the role graph, by Observation~2.2. So, let $\alpha(u)= 2$. So $(1,2) \in E(R)$. Let $t\in T$ be a neighbor of $v$. By assumption \ref{i.x=2,y>2}, $|T| \geq 2$. We define the role coloring as: $\alpha(v) = \alpha(u) = 2$, $\alpha(t) = 1$. Also, color all the vertices in $P_Y$ with the color $1$ and all the remaining vertices in $T \setminus \{t\}$ with the color $3$. It is easy to verify that this coloring is consistent with the role graph $R=(\{1,2,3\}, \{(1,2),(2,3)\})$.
	
	\item We finally show that $G$ has a valid $3$-role coloring if it is connected and  $|X| \geq 3$, and $|Y| \geq 3$. We demonstrate a $3$-role coloring for each of the following sub-cases:\\	
	\paragraph{ Sub-case~1.} $P_X=P_Y=\emptyset$\\	
	Let $\alpha(u)= \{1\}$ for some $u\in U_X$, $\alpha(Y)=\{2\}$, and $\alpha(w)=3$, for all $w \in G \setminus N[U_X]$. The validity follows from the preceding case \ref{i.x=2,y>2}.\\	
	\paragraph{ Sub-case~2.} $P_X \neq \emptyset$, and $P_Y= \emptyset$\\
	The unique neighbor of a pendant in $P_X$ must be a universal vertex. Moreover, $|U_Y| = 1$.
	Suppose that $|U_Y| \geq 2$. Then every vertex in $X$ would have degree at least 2. This contradicts our assumption that $P_X \neq \emptyset$.
	
	Now, we define the role coloring of vertices in $G$ as follows: $\alpha:V(G) \longrightarrow \{1, 2, 3\}$, $\alpha(y) = 1$,  for all $y$ in $Y$; $\alpha(u) = 2$, where $u \in U_X$ and $\alpha(X \setminus \{u\})=3$. Every vertex colored $1$ is adjacent to the vertex colored $2$ (that is $u$) and at least one vertex colored $3$. Thus, $\alpha$ is a valid role-coloring of $G$.\\	
	\paragraph{ Sub-case~3.} Pendants occur in both the partitions.\\	
	Note that $|U_X|= |U_Y|= 1$. 	
	Let $x$ and $y$ be the universal vertices in $X$ and $Y$, respectively. We first assume that $N(x)  \cup   N(y)$ does not induce an independent set. Let $alpha:V \longrightarrow \{1,2,3\}$ be a role coloring of $G$, as before.
	
	In this case, $\alpha (P_X \cup P_Y) = \{1\}$; $\alpha(x)= \alpha(y)= 2$; $\alpha (G\setminus \{\{x\}\cup \{y\} \cup P_X \cup P_Y\})= \{3\}$. Therefore, every vertex colored $1$ is adjacent only to vertices colored $2$. The two vertices colored $2$ are adjacent to all the vertices in each partition and hence adjacent to both $1$ and $3$. Since the neighborhood of $x$ and $y$ do not form an independent set, at least one vertex in each partition is not a pendant. Hence, there are vertices colored $3$ which are adjacent to each other and the universal vertices and $\alpha$ is a valid role coloring of $G$.  
	
	Now, suppose that $N(x) \cup N(y)$ is an independent set. Then, $|N(x)| \geq 2$ and $|N(y)| \geq 2$, as each partition has at least $3$ vertices. Define the role coloring as: $\alpha(x)=\alpha(y) =2$. Color any one neighbor each of $x$ and $y$ with the color $1$ and another neighbor with the color $3$. The remaining vertices can be arbitrarily given the colors $1$ or $3$. $\alpha$ is a valid role coloring of $G$, in this case too.	
	
	
\end{enumerate}

We claim that if $G$ has a valid $3$-role coloring, it must satisfy at least one of the conditions in Theorem~\ref{thm:bichain}.1. Towards a contradiction suppose that none of the cases hold. Then, $|X|=2$, one of its vertices is not universal, $|Y|=2$ and $|Y \setminus P_Y|=1$. We also assume that $X=\{u,v\}$, and that $u$ is universal, whereas $v$ is not. This implies that $G$ is isomorphic to a $P_4$, which is a \no instance for $3$-\textsc{Role coloring}. 

To see that, assign the color $1$ to $a \in P_Y$. By Observation~2.2, the role graph $R$ corresponding to $\alpha$ must be connected. Thus, the neighbor of $u$ must get a new color, say $2$. If possible, let $\alpha(v) = 1$. It forces $\alpha(t) = 2$. This contradicts that $\alpha$ is a $3$-role coloring of $G$. So let $\alpha(v) = 2$. If $\alpha(t)=2$, it forces the role graph to have a single vertex labeled $2$, with a self loop which contradicts the fact that $(1,2) \in E(R)$. If $\alpha(t)=3$, then $v$ has no neighbor colored $1$, which is again a contradiction as $(1,2) \in E(R)$. Therefore, let $\alpha(v)=3$. Then, $\alpha(t) \neq 1$, as $t$ is adjacent to vertices colored $2$ and $3$, but the degree of the color $1$ in $R$ is equal to one. Also, $\alpha(t) \neq 2$, as $t$ has no neighbor colored $1$. If possible, let $\alpha(t)=3$. This is not a valid role coloring as $\alpha(v)=\alpha(t)=3$ but $\alpha(N(v))= \{1, 3\} \neq \{2, 3\}= \alpha (N(t))$. Hence if $G$ is isomorphic to a $P_4$, $G$ is a \textit{no} instance of $3$-\textsc{Role coloring}.\\ 

There are a finite number of cases for which $G$ can be $3$-role colored, and each case can be verified in $\mathcal{O}(n+m)$ time, where $n$ and $m$ are the number of vertices and edges of $G$. Since we have shown that a valid $3$-role coloring exists of these cases alone, it follows that each of them is a \yes instance of $3$-\textsc{Role coloring} and every other case is a \no instance. Hence, we can decide $3$-{\sc role coloring} on bipartite chain graphs in $\mathcal{O}(n+m)$.\hfill $\qed$
\end{proof}

\section{Almost bipartite graphs} \label{sec:para}

An ``{\em almost bipartite}" graph is a graph that has a set of at most $d$ vertices or $d$ edges, $d$ being a constant, which on deletion yield a bipartite graph. Here, we prove that $2$-{\sc Role coloring} is \npc on the class of almost bipartite graphs, even if $d=1$.

\begin{mytheorem}\label{thm:6}
$2$-{\sc Role coloring} is \npc on the class of almost bipartite graphs.
\end{mytheorem}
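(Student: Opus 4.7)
The plan is to reduce from a known \nph problem---most naturally {\sc Hypergraph $2$-coloring} on $3$-uniform hypergraphs (\nph by~\cite{Hypergraph-2-coloring}), since it produces bipartite incidence graphs that fit the almost-bipartite target class. Since a graph that is $1$ edge away from bipartite is also $1$ vertex away from bipartite (remove either endpoint of the distinguished edge), it suffices to construct instances of $2$-{\sc Role coloring} that are $1$ edge from bipartite; hardness then propagates to both variants of the parameter $d$.

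First, I would construct, from a $3$-uniform hypergraph $H=(Q_H,S_H)$, a graph $G$ obtained by taking the bipartite incidence graph $(Q,S,E)$ and augmenting it with a small constant-size gadget that introduces exactly one edge within one of the parts (equivalently, hanging a single triangle off the bipartite backbone through two of its vertices). This makes $G$ bipartite after the removal of one distinguished edge $e$, placing $G$ at edge-distance exactly $1$ from the bipartite class. The pivotal observation is that any $2$-role coloring of a non-bipartite graph is forced to use a role graph carrying at least one self-loop, since the loopless role graph $K_2$ would correspond to a proper $2$-coloring and hence to a bipartition.

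Next, I would show that this forced self-loop, combined with the structure of the gadget, pins down the coloring of the incidence graph so as to encode a valid $2$-coloring of $H$. There are essentially two admissible role graphs to consider (the edge $(1,2)$ with a loop on a single colour, and the edge with loops on both), and for each I would argue that the only valid colorings of the gadget propagate constraints across the backbone forcing each hyperedge-vertex $s\in S$ to see both colors in its neighborhood; this is precisely the condition that every hyperedge of $H$ be bichromatic. The forward direction is then immediate: a valid $2$-coloring $\beta:Q_H\to\{1,2\}$ extends canonically to a $2$-role coloring of $G$ by assigning color $\beta(q)$ to each $q\in Q$, a single fixed color to all of $S$, and the unique consistent colors to the gadget. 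The converse direction extracts $\beta$ from the restriction of the role coloring to $Q$.

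The main obstacle I anticipate is the gadget design under the tight budget $d=1$. With only one extra edge available to break bipartiteness, the gadget must simultaneously (i) force a non-trivial role graph with a self-loop, (ii) eliminate ``degenerate'' colorings of $G$ that would otherwise correspond to no hypergraph coloring (for instance, colorings in which one side of the bipartition is trivially monochromatic, or where the role graph collapses), and (iii) itself remain validly role-colorable for every bichromatic hypergraph. The case analysis over admissible role graphs---similar in spirit to the one carried out in Lemma~\ref{lem:2} for the $k=4$ bipartite case---will require careful bookkeeping to rule out each unwanted configuration, and I expect the bulk of the proof to consist of this analysis.
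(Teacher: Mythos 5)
There is a genuine gap: the heart of your reduction --- the constant-size gadget and the forcing argument that pins the role graph and makes every hyperedge-vertex see both colors --- is never specified, and it is exactly where the difficulty lies. With only two roles, the natural encodings fail. For the role graph with vertex set $\{1,2\}$, edge $(1,2)$ and a loop on $1$, a vertex colored $2$ must see only $1$'s and a vertex colored $1$ must see both colors; if you try to let $Q$ carry the hypergraph coloring and keep $S$ monochromatic, the $Q$-vertices colored $1$ need a neighbor colored $2$, which forces auxiliary pendants or paths on the $Q$-side, and these in turn make the color $2$ unusable on $Q$ (a pendant attached to a $q$ colored $2$ must itself be colored $1$, but then it has no neighbor colored $2$). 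Worse, uniform pendant gadgets admit the degenerate coloring $\alpha(Q)=\{1\}$, $\alpha(S)=\{2\}$, pendants colored $2$, which is valid \emph{regardless} of whether $H$ is $2$-colorable, so the construction as sketched does not distinguish yes- from no-instances. You would also have to dispose of the second admissible role graph (the edge with loops on both endpoints) for your whole construction, not just for the gadget; your single extra edge gives you very little leverage for that, and you give no argument. In effect, your plan amounts to re-proving, inside your own construction, the hardness of role coloring bipartite graphs with the ``edge plus one loop'' role graph, which is a substantial theorem in its own right.

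The paper avoids all of this by reducing from that theorem rather than toward it: by~\cite{Rrcol}, $R$-{\sc Role coloring} is already \npc on connected bipartite graphs when $R$ is an edge with a self-loop on one endpoint. Given such a bipartite instance $G$ and a vertex $x\in V(G)$, one adds three vertices $a,b,c$ and the edges $(a,b),(a,c),(x,a),(x,b)$, creating a single triangle $abx$ (so the result is one vertex or one edge away from bipartite). The pendant $c$ rules out the both-loops role graph (its unique neighbor cannot supply two colors), the triangle rules out the loopless $K_2$, and a short case analysis shows any $2$-role coloring of the new graph restricts to an $R$-role coloring of $G$, while an $R$-role coloring of $G$ extends over the gadget. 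If you want to salvage your approach, either adopt this reduction from $R$-{\sc Role coloring}, or be prepared to design and verify a gadget that simultaneously excludes the degenerate monochromatic colorings above, excludes the both-loops role graph, and remains colorable exactly when $H$ is bichromatic --- none of which your current sketch provides.
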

\begin{proof}
We know from \cite{Rrcol} that given a connected bipartite graph $G$ and a role graph $R$ isomorphic to an edge with a self loop incident on one of its end points, deciding if $G$ has a valid $R$-role coloring, is \npc. Thus, we show a reduction from this problem to $2$-{\sc Role coloring} on almost bipartite graphs with $d= 1$. Refer to Figure \ref{thm6pic}.

\begin{figure}[!ht]
\begin{center}
\includegraphics[width=0.7\textwidth]{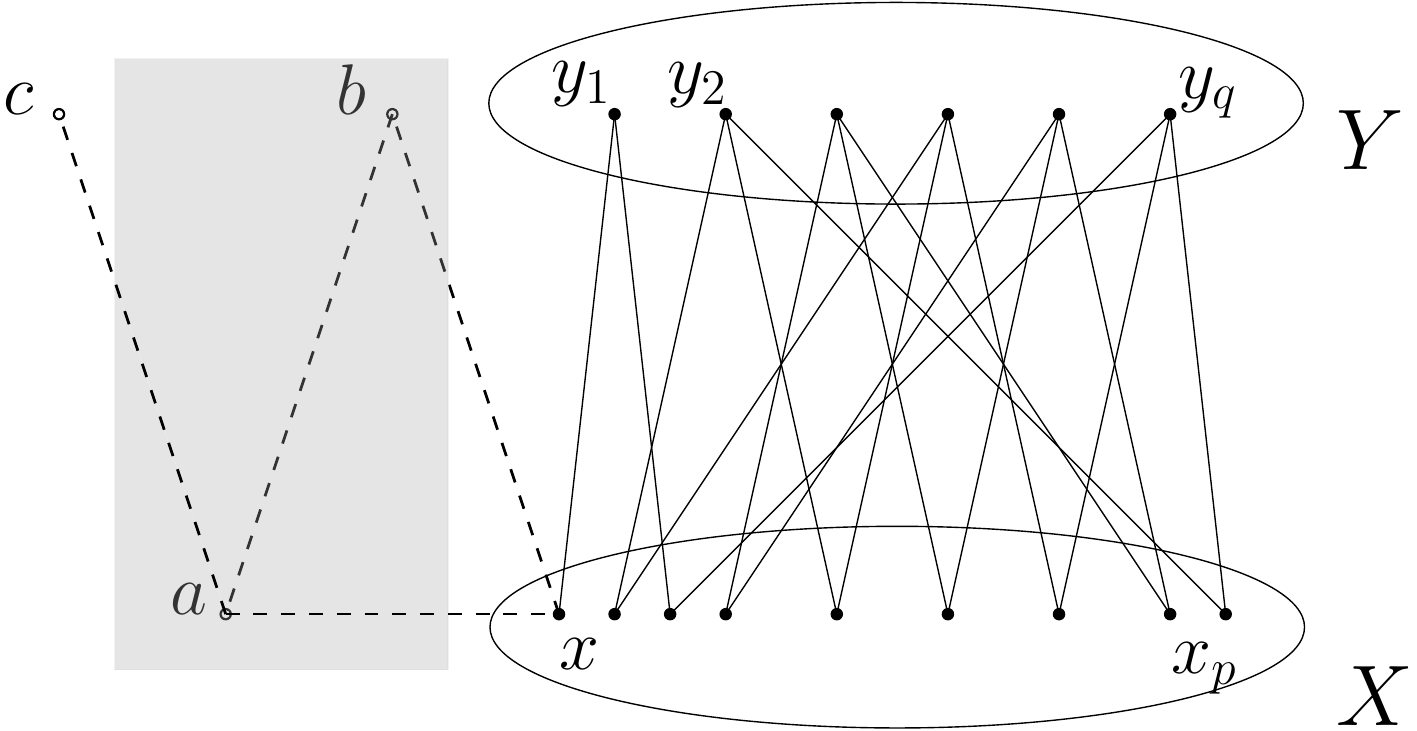}
\caption{The figure depicts a connected bipartite graph with the bipartition $X$ and $Y$. To the bipartite graph the vertices $a, b, \, \text{and}, c$ are added. Deleting the vertex $b$ yields a bipartite graph back.}
\label{thm6pic}
\end{center}
\end{figure}

Let $R$ be defined as $V(R)=\{1, 2\}$ and 
$E(R) = \{(1,1), (1,2)\}$. Let $G$ be a connected bipartite graph and let $x \in  V(G)$. We construct $G'$ from $G$ by adding to it 3 vertices $a,b$ and $c$ and the edges $(a,b), (a,c), (x, a),  (x, b)$. $G'$ is not bipartite as it contains a triangle, namely, $abx$. It is 1 vertex (remove $b$) and 1 edge (remove $(x,b)$) away from the bipartite graph $G$.

Suppose that $G$ is a \yes instance of $R$-{\sc Role coloring}. Then, it has a role coloring $\alpha$, with the corresponding role graph being isomorphic to $R$. Then we can extend $\alpha$ to the vertices of $G'$ as follows: If $\alpha(x) = 1$, then color $a$ with $1$ and $b$ and $c$  with $2$. If $\alpha(x) = 2$ then color both $a$ and $b$ with $1$ and $c$ with $2$. Thus, $G'$ has an $R$-role coloring and therefore a valid $2$-role coloring.

Conversely, suppose that $G'$ has a $2$-role coloring $\alpha'$. First assume that $\alpha'(a)=1$. If both $x$ and $b$ received the color $1$, then $b$ would only be adjacent to vertices colored $1$. Thus, the role graph would be disconnected, as $1$ could only be adjacent to itself. This is not possible as $G'$ is a connected graph. Hence, at least one of them gets the color $2$. Also, $c$ gets color $2$ as its degree is one, but the degree of $1$ in $R$ is two. Now, there are two possible subcases:\\
\noindent {\bf Subcase 1:} {\em $\alpha'(b)=1$ and $\alpha'(x)=2$.}\\
Any other neighbor $y$ of $x$ in $G'$ must be colored $1$. As per the assignment $\alpha'$, $y$ must have a neighbor colored $1$ and another neighbor colored $2$. Therefore, $\alpha'$ restricted to $G$ is an $R$-role coloring, and $G$ has a valid $R$-role coloring.\\
\noindent {\bf Subcase 2:} {\em $\alpha'(b)= 2$ and $\alpha'(x)=1$}\\
Suppose that all the other neighbors of $x$ in $G'$ are colored $2$. Those neighbors would, in turn, be adjacent to vertices colored $1$ in the partition of $x$. These vertices colored $1$ must obey the coloring $\alpha'$, and have neighbors of both the colors $1$ and $2$. Once again, restricting $\alpha'$ to $G$, gives an $R$-role coloring of $G$, showing that $G$ has a valid $R$-role coloring.

Symmetrically, if $\alpha'(a)= 2$, the role graph is again isomorphic to $R$. Hence, if $G'$ has a $2$-role coloring it has an $R$-role coloring and therefore $G$ has a valid $R$-role coloring. \qed
\end{proof}

\section{Conclusions}

We have proved that $k$-{\sc Role coloring}, for a fixed $k \geq 3$, remains \npc even when restricted to the class of bipartite graphs. We have also shown that $2$-{\sc Role coloring}, which is polynomial time solvable on bipartite graphs, becomes \npc on graphs that are a constant number of vertices or edges away from bipartite graphs. We do not know of any existing literature on \textsc{$k$-Role coloring} on graphs close to a specific hereditary class of graphs. This opens avenues for further research.

On a subclass of $2K_2$-free graphs, i.e, bipartite chain graphs, we have shown that $3$-{\sc Role coloring} can be solved in polynomial time. However, the complexity of $k$-{\sc Role coloring}, for $k$= $2$ and $3$, is open on general $2K_2$-free graphs.

\subsection*{Acknowledgements}
The authors would like to thank Professor Venkatesh Raman for several fruitful discussions and his invaluable feedback. We are also grateful to the anonymous reviewers whose thorough and precise feedback helped us improve the paper.

\bibliographystyle{splncs04}

\bibliography{citations}

\end{document}